\DeclareMathAlphabet{\mathbbold}{U}{bbold}{m}{n} 
   \definecolor{BLACK}{gray}{0}
   \definecolor{WHITE}{gray}{1}
   \definecolor{RED}{rgb}{1,0,0}
   \definecolor{GREEN}{rgb}{0,1,0}
   \definecolor{BLUE}{rgb}{0,0,1}
   \definecolor{CYAN}{cmyk}{1,0,0,0}
   \definecolor{MAGENTA}{cmyk}{0,1,0,0}
   \definecolor{YELLOW}{cmyk}{0,0,1,0}
\def\simgt{\mathrel{\lower2.5pt\vbox{\lineskip=0pt\baselineskip=0pt
           \hbox{$>$}\hbox{$\sim$}}}}
\def\simlt{\mathrel{\lower2.5pt\vbox{\lineskip=0pt\baselineskip=0pt
           \hbox{$<$}\hbox{$\sim$}}}}
\newcommand{\be}{\begin{equation}}
\newcommand{\bea}{\begin{eqnarray}}
\newcommand{\ee}{\end{equation}}
\newcommand{\eea}{\end{eqnarray}}
\newcommand\beq{\begin{equation}}
\newcommand\eeq{\end{equation}}
\newcommand{\Tr}{{\rm Tr\,}}
\newcommand{\Eq}[1]{Eq.~(\ref{#1})}
\newcommand{\Eqs}[2]{Eqs.~(\ref{#1}) and (\ref{#2})}
\newcommand{\Sec}[1]{Sec.~\ref{#1}}
\newcommand{\Fig}[1]{Fig.~\ref{#1}}
\newcommand{\Ref}[1]{Ref.~\cite{#1}}
\newcommand{\ket}[1]{|#1\rangle}
\newcommand{\bra}[1]{\langle #1|}
\newtheorem{theorem}{Theorem}
\newtheorem{lemma}{Proposition}
\begin{document}
\interfootnotelinepenalty=10000
\baselineskip=18pt
\hfill CALT-TH-2016-008
\hfill

\vspace{1.5cm}
\thispagestyle{empty}
\begin{center}
{\LARGE\bf
Entanglement Conservation, ER=EPR, and a\\ New Classical Area Theorem for Wormholes
}

\bigskip\vspace{1.5cm}{
{\large
Grant N. Remmen,$^1$ Ning Bao,$^{1,2}$ and Jason Pollack$^{1}$}
} \\[7mm]
{\it
$^1$Walter Burke Institute for Theoretical Physics\\ and \\
$^2$Institute for Quantum Information and Matter,\\[-1mm] California Institute of Technology, Pasadena, CA 91125, USA}

\let\thefootnote\relax\footnote{e-mail: 
\url{gremmen@theory.caltech.edu},
\url{ningbao@theory.caltech.edu},
\url{jpollack@theory.caltech.edu}} \\
 \end{center}
\bigskip
\centerline{\large\bf Abstract}

\begin{quote}\small
We consider the question of entanglement conservation in the context of the ER=EPR correspondence equating quantum entanglement with wormholes. In quantum mechanics, the entanglement between a system and its complement is conserved under unitary operations that act independently on each; ER=EPR suggests that an analogous statement should hold for wormholes. We accordingly prove a new area theorem in general relativity: for a collection of dynamical wormholes and black holes in a spacetime satisfying the null curvature condition, the maximin area for a subset of the horizons (giving the largest area attained by the minimal cross section of the multi-wormhole throat separating the subset from its complement) is invariant under classical time evolution along the outermost apparent horizons. The evolution can be completely general, including horizon mergers and the addition of classical matter satisfying the null energy condition. This theorem is the gravitational dual of entanglement conservation and thus constitutes an explicit characterization of the ER=EPR duality in the classical limit.
\end{quote}

\setcounter{footnote}{0}

\newpage
\tableofcontents

\section{Introduction}\label{sec:Introduction}

All of the states of a quantum mechanical theory are on the same footing when considered as vectors in a Hilbert space: any state can be transformed into any other state by the application of a unitary operator. When the Hilbert space can be decomposed into subsystems, however, there is a natural way to categorize them: by the entanglement entropy of the reduced density matrix of a subsystem constructed from the states. Entanglement between two subsystems is responsible for the ``spooky action at a distance'' often considered a characteristic feature of quantum mechanics: measuring some property of a subsystem determines the outcome of measuring the same property on another entangled subsystem, even a causally disconnected one. 

It is well known that this seeming nonlocality does not lead to violations of causality. It cannot be used to send faster-than-light messages \cite{Dieks} and in fact it is impossible for any measurement to determine whether the state is entangled (see, e.g., \Ref{nielsen2010quantum}). Similarly, it is impossible to alter the entanglement between a system and its environment (that is, to change the entanglement entropy of the reduced density matrix of the system) by acting purely on the degrees of freedom in the system or by adding more unentangled degrees of freedom. A number of well-established properties, such as monogamy \cite{Coffman:1999jd} and strong subadditivity \cite{Lieb:1973cp}, constrain the entanglement entropy of subsystems created from arbitrary factorizations of the Hilbert space. 

Although entanglement entropy is a fundamental quantity, it is typically very difficult to compute in field theories, where working directly with the reduced density matrix can be computationally intractable, although important progress has been made in certain conformal field theories \cite{Holzhey:1994we,Calabrese:2004eu} and more generally along lightsheets for interacting quantum field theories \cite{Bousso:2014uxa}. The AdS/CFT correspondence \cite{AdSCFT,Witten,MAGOO}, however, allows us to transform many field-theoretic questions to a gravitational footing. In particular, the Ryu-Takayanagi formula \cite{RT} equates the entanglement entropy of a region for a state in a conformal field theory living on the boundary of an asymptotically AdS spacetime to the area of a minimal surface with the same boundary as that region in the spacetime corresponding to that CFT state. Using this identification of entropy with area, a number of ``holographic entanglement inequalities'' have been proven \cite{Hayden:2011ag,Bao:2015bfa}, some reproducing and some stronger than the purely quantum mechanical entanglement inequalities.

Motivated in part by AdS/CFT, as well as a number of older ideas in black hole thermodynamics \cite{BHLaws,Bekenstein} and holography \cite{Holography1,Holography2,Holography3}, Maldacena and Susskind have recently conjectured \cite{ER=EPR} an ER=EPR correspondence, an exact duality between entangled states (Einstein-Podolsky-Rosen \cite{EPR} pairs) and so-called ``quantum wormholes'', which reduce in the classical general relativistic limit to two-sided black holes (Einstein-Rosen \cite{ER} bridges, i.e.,  wormholes). In a series of recent papers, we have considered the implications of this correspondence in the purely classical regime. In this limit, if the ER=EPR duality holds true, certain statements in quantum mechanics about entangled states should match directly with statements in general relativity about black holes and wormholes \cite{Susskind}, with the same assumptions required on both sides. We indeed previously found two beautiful and nontrivial detailed correspondences: the no-cloning theorem in quantum mechanics corresponds to the no-go theorem for topology change in general relativity \cite{Bao:2015nqa} and the unobservability of entanglement corresponds to the undetectability of the presence or absence of a wormhole \cite{Bao:2015nca}. 

In this paper, we extend this correspondence to a direct equality between the entanglement entropy and a certain invariant area, which we define, of a geometry containing classical black holes and wormholes. We follow a long tradition of clarifying general relativistic dynamics using area theorems \cite{Hawking:1971vc,Hawking&Ellis,Krolak,Bousso:2014sda,Bousso:2015mqa}, which hold that various areas of interest satisfy certain properties under time evolution. Our strategy is to show that the area in question remains unchanged under dynamics constituting the gravitational analogue of applying tensor product operators to an individual system and its complement. We show that, just as entanglement entropy cannot be changed by acting on the subsystem and its complement separately, this area is not altered by merging pairs of black holes or wormholes or by adding classical (unentangled) matter. The area we consider is chosen to be that of a maximin surface \cite{Hubeny:2007xt,Wall:2012uf} for a collection of wormhole horizons, a time-dependent generalization of the Ryu-Takayanagi minimal area, which again establishes that the entanglement entropy is also conserved under these operations. At least for asymptotically AdS spacetimes, our result constitutes an explicit characterization of the ER=EPR correspondence in the classical limit. Moreover, our theorem is additionally interesting from the gravitational perspective alone, as it constitutes a new area law within general relativity.

This paper is structured as follows. In Section \ref{sec:entanglement}, we review the simple quantum mechanical fact that entanglement is conserved under local operations. In Section \ref{sec:Maximin}, we define the maximin surface and review its properties. In Section \ref{sec:gravity}, we prove our desired general relativistic theorem. Finally, we discuss the implications of our result and conclude in Section \ref{sec:Conclusions}.

\section{Conservation of Entanglement}\label{sec:entanglement}

Consider a Hilbert space ${\cal H}$ that can be written as a tensor product of two factors ${\cal H}_{\rm L}$ and ${\cal H}_{\rm R}$ to which we will refer as ``right'' and ``left'', though they need not have any spatial interpretation. For a state $\ket{\psi}\in{\cal H}$, let us define the reduced density matrix associated with ${\cal H}_{\rm L}$ as $\rho_{\rm L} = \Tr_{{\cal H}_{\rm R}} \ket{\psi}\bra{\psi}$ and use this to define the entanglement entropy between the right and left sides of the Hilbert space:
\be 
S(L) = S(R) = -\Tr_{{\cal H}_{\rm L}} \rho_{\rm L} \log \rho_{\rm L}.
\ee
It is straightforward to see that adding more unentangled degrees of freedom to ${\cal H}_{\rm L}$ will not affect the entanglement entropy, as by construction this does not introduce new correlations between ${\cal H}_{\rm L}$ and ${\cal H}_{\rm R}$. This is particularly clear to see by using the equivalence of $S(L)$ and $S(R)$ for pure states, as adding in further unentangled degrees of freedom will maintain the purity of the joint system.

Now let us consider the effect on $S(L)$ of applying a unitary $U = U_{\rm L} \otimes U_{\rm R}$ to $\ket{\psi}$. As $\Tr_{{\cal H}_{\rm R}} U=U_{\rm L}$, we can consider only the action of $U_{\rm L}$ on $\rho_{\rm L}$, as $U_{\rm R}$ acts trivially in ${\cal H}_{\rm L}$. This transforms $S(L)$ into
\be
S(L)= -\Tr_{{\cal H}_{\rm L}} U_{\rm L}\rho_{\rm L}U^{\dagger}_{\rm L} \log \left(U_{\rm L}\rho_{\rm L}U^{\dagger}_{\rm L}\right).
\ee
 One can at this point expand the logarithm by power series, with individual terms of the form
\be
S_n(L)=-\Tr_{{\cal H}_{\rm L}}c_n U_{\rm L}\rho_{\rm L}U^{\dagger}_{\rm L} \left( \mathbbm{1}-U_{\rm L}\rho_{\rm L}U^{\dagger}_{\rm L}\right)^n
\ee
for some real $c_n$. For each term in the expansion of the product, all but the first $U_{\rm L}$ and the last $U^{\dagger}_{\rm L}$ will cancel as $U^{\dagger}_{\rm L}U_{\rm L}=\mathbbm{1}$. Finally, by cyclicity of the trace, the remaining $U_{\rm L}$ and $U^{\dagger}_{\rm L}$ will also cancel, leaving $S_n(L)$ invariant. Thus, $S(L)$ remains invariant under unitary transformations of the form $U = U_{\rm L} \otimes U_{\rm R}$. This is the statement of conservation of entanglement.

\section{The Maximin Surface}\label{sec:Maximin}
A holographic characterization of the entanglement entropy begins with its calculation on a constant-time slice, where the Ryu-Takayanagi (RT) formula \cite{RT} holds:
\be
S(H)=\frac {A_H}{4G\hbar}.
\ee
This relates the area $A_H$ of the minimal surface subtending a region $H$ to the entanglement entropy of that region with its complement. When the region is a complete boundary, this reduces to the minimal surface homologous to the region. For example, in a hypothetical static wormhole geometry, the entanglement entropy between the two ends would be given by the minimal cross-sectional area of the wormhole.

This method of computing entanglement entropy on a constant-time slice for static geometries was generalized by the Hubeny-Rangamani-Takayanagi (HRT) proposal \cite{Hubeny:2007xt}. The key insight here was that in general there do not exist surfaces that have minimal area in time, as small perturbations can decrease the area. The new proposal was that the area now scales as the smallest extremal area surface, as opposed to the minimal area. The homology condition mentioned previously remains in this prescription.

The maximin proposal \cite{Wall:2012uf} gives an explicit algorithm for the implementation of the HRT prescription.  In the following definitions, we will closely follow the conventions used by Wall \cite{Wall:2012uf}. We define $C[H,\Gamma]$ to be the codimension-two surface of minimal area homologous to $H$ anchored to $\partial H$ that lies on any complete achronal (i.e., spacelike or null) slice $\Gamma$. Note that $C[H,\Gamma]$ can refer to any minimal area surface that exists on $\Gamma$. Next, the maximin surface $C[H]$ is defined as any of the $C[H,\Gamma]$ with the largest area when optimized over all achronal surfaces $\Gamma$. When multiple such candidate maximin surfaces exist, we refine the definition of $C[H]$ to mean any such surface that is a local maximum as a functional over achronal surfaces $\Gamma.$ In the HRT proposal, the entanglement of $H$ with its complement in the boundary is given by $S(H)=\mathrm{area}[C[H]]/4G\hbar$.

As an example, for a wormhole geometry in which we are computing the entanglement entropy between the two horizons of the ER bridge, $\partial H$ is trivial and the homology condition means that $C[H,\Gamma]$ is the surface of minimal cross-sectional area on an achronal surface $\Gamma$ in the interior causal diamond of the horizons. Then the maximin surface $C[H]$ is a $C[H,\Gamma]$ with $\Gamma$ chosen such that the area is maximized.

Such surfaces can be shown to exist for large classes of spacetimes and in particular $C[H]$ can be proven to be equal to the extremal HRT surface for spacetimes obeying the null curvature condition, which is given by
\be
R_{\mu\nu}k^\mu k^\nu \geq 0,\label{eq:NCC}
\ee
 where $k^\mu$ is any null vector and $R_{\mu\nu}$ is the Ricci tensor.\footnote{For spacetimes satisfying the Einstein equation $R_{\mu\nu} - R \,g_{\mu\nu}/2 = 8\pi G \, T_{\mu\nu}$ for energy-momentum tensor $T_{\mu\nu}$, the null curvature condition is equivalent to the null energy condition $T_{\mu\nu} k^\mu k^\nu \geq 0$.} As HRT is a covariant method of calculating entanglement entropy, the maximin construction is therefore manifestly covariant as well.

Maximin surfaces in general have some further nice properties, proven in \Ref{Wall:2012uf}: they have smaller area than the causal surface (the edge of the causal domain of dependence associated with bulk causality), they move monotonically outward as the boundary region increases in size, they obey strong subadditivity, and they also obey monogamy of mutual information, but not necessarily other inequalities that hold for constant-time slices \cite{Wall:2012uf,Hayden:2011ag,Bao:2015bfa}:

\be
\begin{aligned}
S(AB)+S(BC)&\geq S(B)+S(ABC), \\
S(AB)+S(BC)+S(AC)&\geq S(A)+S(B)+S(C)+S(ABC).
\end{aligned}
\ee
for disjoint regions $A$, $B$, and $C$. The above statements are all proven in detail for maximin surfaces in \Ref{Wall:2012uf}.
\section{A Multi-Wormhole Area Theorem}\label{sec:gravity}

We are now ready to find the gravitational statement dual to entanglement conservation. Let us take as our spacetime $M$ the most general possible setup to consider in the context of the ER=EPR correspondence: an arbitrary, dynamical collection of wormholes and black holes in asymptotically AdS spacetime. We work in $D$ spacetime dimensions. Throughout, we will assume that $M$ obeys the null curvature condition~\eqref{eq:NCC}. The degrees of freedom associated with the Hilbert space ${\cal H} = \otimes_i {\cal H}_i$ can be considered to be localized on the union of the stretched horizons, with each horizon comprising one of the ${\cal H}_i$ factors. We choose our spacetime setup such that the wormholes are past-initialized, by which we mean that for $t\leq 0$ the wormholes are far apart and the spacetime around the wormholes is in vacuum, with negligible back-reaction. Suppose we arbitrarily divide this system into two subsystems by labeling each horizon as ``left'' or ``right''. The left and right Hilbert spaces factorize as ${\cal H}_{\rm L} = \otimes_i {\cal H}_{{\rm L},i}$ and ${\cal H}_{\rm R} = \otimes_i {\cal H}_{{\rm R},i}$, where $H_{{\rm L (R)},i}$ contains the degrees of freedom associated with horizon $i$ in the left (right) set. Now, some of the black holes in the left subset may be entangled with each other and so be described by ER bridges among the left set. A similar statement applies to the right set. Importantly, there may be horizons in the left set entangled with horizons in the right set, describing ER bridges across the left/right boundary. For the sake of tractability, we consider horizons that are only pairwise entangled and that begin in equal-mass pairs in the asymptotically AdS spacetime; this stipulation can be made without loss of generality provided we consider black holes smaller than the AdS length and do not consider changes to the asymptotic structure of the spacetime (see, e.g., \Ref{Balasubramanian:2014hda}). (To treat wormholes with mouths of unequal masses, we could start in an equal-mass configuration and add matter into one of the mouths.) We thus take any two horizons $i$ and $j$ that are entangled to be in the thermofield double state at $t=0$,
\be
\Pi_i \Pi_j \ket{\psi}(t=0) = \ket{\psi_{i,j}}(t=0) = \frac{1}{\sqrt{Z}}\sum_n e^{-\beta E_n/2} \ket{\bar{n}}_i \otimes \ket{n}_j,\label{eq:thermofield}
\ee 
where $\Pi_i$ is a projector onto the degrees of freedom associated with ${\cal H}_i$, $1/\beta$ is the temperature, and $\ket{n}_i$ is the $n^{\rm th}$ eigenstate of the CFT corresponding to the degrees of freedom in ${\cal H}_i$ with eigenvalue $E_n$.

Let us define a time slicing of the spacetime $M$ into spacelike codimension-one surfaces $\Sigma_t$ parameterized by a real number $t$ that smoothly approaches the standard AdS time coordinate in the limit of spacelike infinity, where the metric is asymptotically AdS. The $\Sigma_t$ are chosen to pass through the wormholes without coordinate singularities along the horizon (cf. Kruskal coordinates); see \Fig{fig:PenroseSimple} for an example geometry. For the wormholes spanning the left and right subsets, we write as $L_i$ and $R_i$ the null codimension-one surfaces that form the outermost left and right apparent horizons, respectively, and define $L = \cup_i L_i$ and $R  = \cup_i R_i$. Note that, since new apparent horizons can form outside of the initial apparent horizons, $L_i$ and $R_i$ are each not necessarily connected, but are the piecewise-connected union of the outermost connected components of the apparent horizons. On a given spacelike slice, an apparent horizon is a boundary between regions in which the outgoing orthogonal null congruences are diverging (untrapped) or converging (trapped) \cite{Hawking&Ellis}. Of course, the indexing $i$ may become redundant if horizons merge among the $L_i$ or $R_i$. Let us define the restriction of the outermost apparent horizons to the constant-time slice $\Sigma_t$ as the spacelike codimension-two surfaces $L_{t,i} = L_i \cap \Sigma_t$ and $R_{t,i} = R_i \cap \Sigma_t$ and similarly $L_t = L \cap \Sigma_t$ and $R_t = R \cap \Sigma_t$. Without loss of generality, we will use the initial spatial separation of the wormholes along with diffeomorphism invariance to choose the $\Sigma_t$ and the parameterization of $t$ such that $\Sigma_0$ intersects the codimension-two bifurcation surfaces $B_i \equiv L_{0,i}= R_{0,i}$ at which all the wormholes have zero length. The past-initialization condition then means that the wormholes are far apart in the white hole portion of the spacetime, which corresponds to $t\leq 0$. Throughout, we will assume that $M\cup\partial M$ is globally hyperbolic; equivalently \cite{Geroch:1970uw}, we will assume that the closure of $\Sigma_0$ is a Cauchy surface for $M\cup\partial M$.

\begin{figure}[t]
  \begin{center}
    \includegraphics[width=0.5\textwidth]{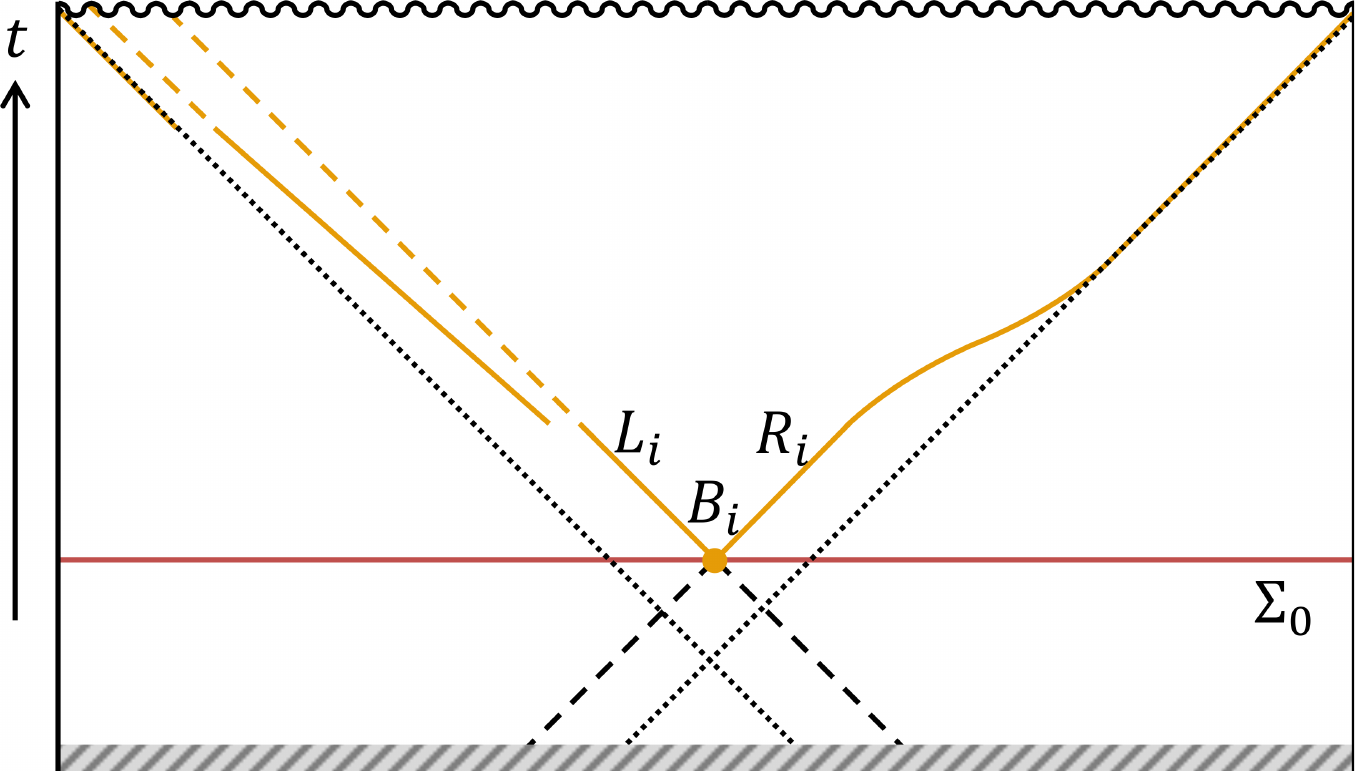}
  \end{center}
  \caption{Penrose diagram, for an example spacetime $M$, of a slice through a particular wormhole $i$ joining a left and right horizon. (Showing the full geometry would require a multi-sheeted Penrose diagram to accommodate the multiple wormholes.) The spacelike codimension-one surface $\Sigma_0$ is shown in burgundy. The initial bifurcation codimension-two surface $B_i$ is illustrated by the orange dot. Apparent horizons are denoted by the orange lines, with the outermost apparent horizons $L_i$ and $R_i$ being the solid lines. For $t\leq 0$, the setup is past-initialized and the metric is given to good approximation by the eternal black hole in AdS, where the past event horizon of the white hole is indicated by the dashed black lines. The dotted black lines denote the future event horizon of $M$. As the spacetime at negative $t$ is known, we do not show the entire Penrose diagram in this region, as indicated by the diagonal gray lines.}
\label{fig:PenroseSimple}
\end{figure}

Now, for each $t>0$, let us define a $D$-dimensional region of spacetime $W_t$ as the union over all achronal surfaces with boundary $L_t \cup R_t$; that is, $W_t$ is the causal diamond associated with $L_t \cup R_t$. A single wormhole has topology $S^{D-2} \otimes \mathbb{R}$ when restricted to $\Sigma_t$. The initial spacetime $W_0$ is special: it is a codimension-two surface that is just the union over all the $B_i$, with topology $(S^{D-2})^{\otimes N}$, where $N$ is the number of wormholes connecting the left and right subsets.

\begin{figure}[H]
  \begin{center}
    \includegraphics[width=0.5\textwidth]{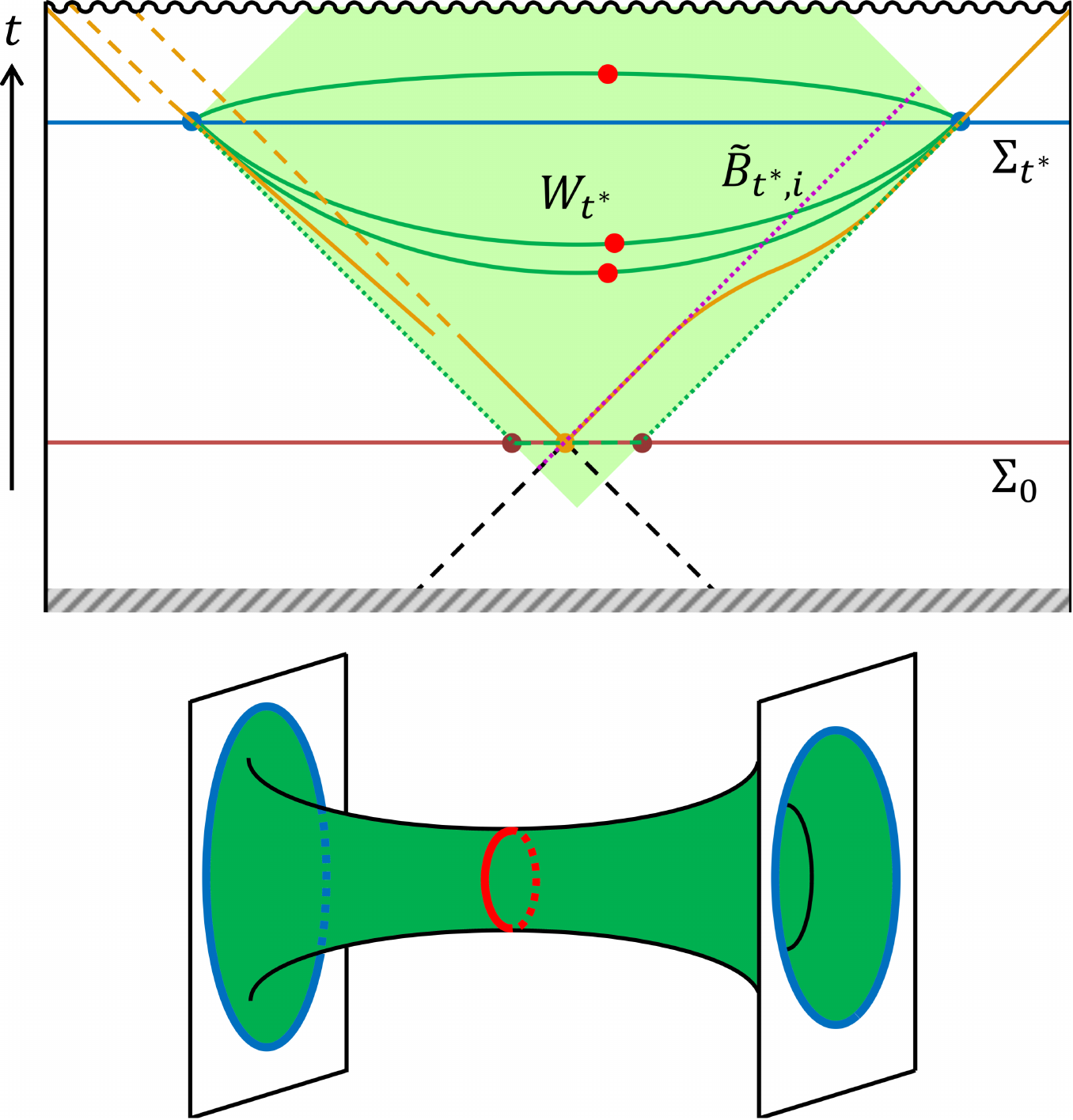}
  \end{center}
  \caption{Penrose diagram (top), for the example geometry of \Fig{fig:PenroseSimple}, of the segment of the region $W_{t^*}$ (green shading), for some $t^*$, that passes through a particular wormhole $i$ joining a left and right horizon. The apparent horizons (orange lines, with solid lines for the outermost apparent horizons $L_i$ and $R_i$), bifurcation surface $B_i$ (orange dot), spacelike codimension-one surface $\Sigma_0$ (burgundy line), and past event horizons for the white hole (dashed black lines) are illustrated as in \Fig{fig:PenroseSimple}. The spacelike codimension-one surface $\Sigma_{t^*}$ is shown as a blue line. The purple dotted line denotes the truncated null surface $\tilde{B}_{t^*,i}$ formed from the rightward outgoing orthogonal null congruence $\tilde{B}_i$ originating on $B_i$, used in Proposition~\ref{Prop1}. The codimension-two boundaries of $W_{t^*}$ along wormhole $i$, $L_{t^*,i}$ and $R_{t^*,i}$, are indicated by the blue dots. The achronal codimension-one surfaces $\Gamma_{t^*}(\alpha)$ foliating $W_{t^*}$ are indicated within wormhole $i$ by the green lines; the codimension-two surfaces $C_{t^*}(\alpha)$ of minimal area for some slices $\Gamma_{t^*}(\alpha)$ are indicated within wormhole $i$ by red dots. The particular surface $\Gamma_{t^*}(0)$, constructed in \Eq{eq:Gammat*0}, is shown (for the portion restricted to wormhole $i$) by the dashed and dotted green lines, corresponding to $\Sigma_0 \cap W_{t^*}$ (the horizontal section) and $M_+ \cap \dot{J}^-[\Sigma_{t^*}\backslash W_{t^*}] = \tilde{L} \cup \tilde{R}$ (the diagonal sections), respectively. The burgundy dots denote the pieces of $\tilde{L}_0$ and $\tilde{R}_0$ in the vicinity of wormhole $i$. The embedding diagram (bottom) shows a particular slice $\Gamma_{t^*}(\alpha)$ through $W_{t^*}$ for some $\alpha$, where, as in the Penrose diagram, the codimension-two boundaries $L_{t^*,i}$ and $R_{t^*,i}$ are shown in blue and the surface $C_{t^*}(\alpha)$ of minimal cross-sectional area, restricted to wormhole $i$, is shown in red.}
\label{fig:Penrose}
\end{figure}

For a given $W_t$, let us define a slicing of $W_t$, parameterized by $\alpha$, with achronal codimension-one surfaces $\Gamma_t(\alpha)$, where the boundary of $\Gamma_t(\alpha)$ is anchored at $L_t \cup R_t$ for all $\alpha$ and where $\alpha$ increases monotonically as we move from the past to the future boundary of $W_t$. Now, we can imagine slicing $\Gamma_t(\alpha)$ into codimension-two surfaces and write as $C_t(\alpha)$ the surface with minimal area [i.e., the minimal cross-sectional area of $\Gamma_t(\alpha)$]; see \Fig{fig:Penrose}. We can now define the maximin surface $C_t$ for $W_t$ as a surface for which the area of $C_t(\alpha)$ attains its maximum under our achronal slicing $\Gamma_t(\alpha)$, maximized over all possible such slicings. That is, $C_t$ is a codimension-two surface with the maximum area, among the set of the surfaces  of minimal cross-sectional area, for all achronal slices through $W_t$. 

The main result that we will prove is that the area of the maximin surface $C_t$ is actually independent of $t$, equaling just the sum of the areas of the initial bifurcation surfaces $B_i$.\footnote{In \Ref{Mukund} it was shown for the special cases of the Schwarzschild-AdS and the single, symmetric, Vaidya-Schwarzschild-AdS geometries that the initial bifurcation surface is the extremal surface in the HRT prescription. Our theorem in this paper generalizes this result to an arbitrary, dynamical, multi-wormhole geometry in asymptotically AdS spacetime that is past-initialized and that obeys the null curvature condition.} In most cases, the maximin surface $C_t$ will actually be the union of the initial bifurcation surfaces $B_i$, independent of $t$. In other words, the maximin area is invariant among all of the different causal diamonds $W_t$. Interpreting the area of the maximin surface as an entropy, this is the gravitational analogue of entanglement conservation. We will first prove a few intermediate results.

\begin{lemma}
\label{Prop1}
The area of the maximin surface $C_t$ is upper bounded by the sum of the areas of the initial bifurcation surfaces $B_i$.
\end{lemma}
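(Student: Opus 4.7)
The plan is to reduce the question to producing, for any achronal slicing $\Gamma_t(\alpha)$ of $W_t$, a single homologous cross section whose area is bounded above by $\sum_i \text{area}(B_i)$. Since $C_t(\alpha)$ is by definition the minimum-area cross section on $\Gamma_t(\alpha)$, and $C_t$ is then obtained by \emph{maximizing} area$(C_t(\alpha))$ over $\alpha$ and over all admissible slicings, such a bound immediately gives area$(C_t) \leq \sum_i \text{area}(B_i)$. The natural candidate is the intersection of $\Gamma_t(\alpha)$ with the null ``screen'' $\tilde B \equiv \cup_i \tilde B_i$ sketched in \Fig{fig:Penrose}.

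Concretely, for each wormhole $i$, let $\tilde B_i$ denote the null codimension-one surface generated by the future-directed null congruence emanating orthogonally from $B_i$ in the rightward direction, i.e., the outgoing normal that points toward $R_i$. Because $B_i$ is a bifurcation surface sitting in an initially stationary vacuum region (guaranteed by past-initialization), the expansion $\theta$ of this congruence vanishes on $B_i$. Raychaudhuri's equation,
\be
\frac{d\theta}{d\lambda} = -\frac{\theta^2}{D-2} - \sigma_{\mu\nu}\sigma^{\mu\nu} - R_{\mu\nu}k^\mu k^\nu,
\ee
combined with the null curvature condition~\eqref{eq:NCC}, forces $\theta \leq 0$ everywhere along $\tilde B_i$ to the future of $B_i$. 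Consequently the induced area element is non-increasing along the generators, and any codimension-two cross section of $\tilde B_i$ to the future of $B_i$ has area no greater than area$(B_i)$. Caustics, if they develop, only strengthen this inequality: we simply truncate the generator there.

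For an arbitrary achronal slice $\Gamma_t(\alpha)$ of $W_t$ with boundary $L_t \cup R_t$, let $S_i(\alpha) = \Gamma_t(\alpha) \cap \tilde B_i$. Since both $\Gamma_t(\alpha)$ and $\tilde B_i$ are achronal codimension-one hypersurfaces, their intersection is generically a spacelike codimension-two surface in $M$. Within the $S^{D-2} \times \mathbb R$ topology inherited by $\Gamma_t(\alpha)$ along wormhole $i$, this intersection wraps the throat and therefore lies in the homology class separating $L_{t,i}$ from $R_{t,i}$. Hence $\cup_i S_i(\alpha)$ is an admissible cross section of $\Gamma_t(\alpha)$, so by the minimality of $C_t(\alpha)$ together with the focusing result above,
\be
\text{area}(C_t(\alpha)) \;\leq\; \sum_i \text{area}(S_i(\alpha)) \;\leq\; \sum_i \text{area}(B_i).
\ee
Taking the supremum over $\alpha$ and over all admissible slicings of $W_t$ yields the claim.

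The principal obstacle is the geometric step of verifying that each $S_i(\alpha)$ is a genuine codimension-two cross section of $\Gamma_t(\alpha)$ in the correct homology class---that is, that $\tilde B_i$ actually reaches across to $R_{t,i}$ inside $W_t$ and meets $\Gamma_t(\alpha)$ transversely before developing caustics. Past-initialization and global hyperbolicity of $M \cup \partial M$ should control the behavior of $\tilde B_i$ near $B_i$, but once nonlinear dynamics, added matter, and horizon mergers are switched on, additional care is required to ensure that no pathology obstructs the construction.
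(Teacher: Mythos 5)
Your proposal is correct and follows essentially the same route as the paper: propagate the outgoing orthogonal null congruence $\tilde{B}_i$ from each bifurcation surface $B_i$, use Raychaudhuri plus the null curvature condition (with $\theta=0$ on $B_i$ and past-initialization) to show its cross-sectional area never exceeds $\mathrm{area}[B_i]$, and then use its intersection with each achronal slice as a homologous competitor that upper-bounds the minimal cross-section $C_t(\alpha)$. The ``principal obstacle'' you flag---that the truncated congruence furnishes a complete cross-section of every slice of $W_t$---is handled in the paper by essentially the same appeal you anticipate: past-initialization excludes caustics to the past of $B_i$, and some subset of generators persists until the singularity or the future boundary of $W_t$, so the congruence meets every achronal slice.
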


\begin{proof}
Consider the rightward outgoing orthogonal null congruence $\tilde{B}_i$, a null codimension-one surface starting on $B_i$ and satisfying the geodesic equation. Choosing some particular $t^*$ arbitrarily, we truncate the null geodesics generating $\tilde{B}_i$ whenever a caustic is reached or when they intersect either the future singularity or the future null boundary of $W_{t^*}$; we further extend the null geodesics into the past until they intersect the past null boundary of $W_{t^*}$. We will hereafter write the truncated null surface as $\tilde{B}_{t^*,i}$. Let $\lambda$ be an affine parameter for $\tilde{B}_{t^*,i}$ that increases toward the future and vanishes on $B_i$; let us write $\tilde{B}_{t^*,i}(\lambda)$ for the spatial codimension-two surface at fixed $\lambda$.
The rotation $\hat{\omega}_{\mu\nu}$ in a space orthogonal to the tangent vector $k^\mu = (\mathrm{d}/{\mathrm{d}\lambda)^\mu}$ satisfies \cite{Sean}
\be
\frac{{\rm D}\hat{\omega}_{\mu\nu}}{{\rm d}\lambda} = -\theta\hat{\omega}_{\mu\nu}, 
\ee
where $\theta = \nabla_\mu k^\mu$ is the expansion. Since $\theta$ vanishes on $B_i$, $\hat{\omega}_{\mu\nu}$ vanishes identically on $\tilde{B}_{t^*,i}$. The Raychaudhuri equation is therefore
\be
\frac{{\rm d}\theta}{{\rm d}\lambda} = -\frac{1}{D-2}\theta^2 - \hat{\sigma}_{\mu\nu}\hat{\sigma}^{\mu\nu} - R_{\mu\nu}k^\mu k^\nu,\label{eq:Raychaudhuri}
\ee
where $\hat{\sigma}_{\mu\nu}$ is the shear and $R_{\mu\nu}$ is the Ricci tensor. We note that if the null curvature condition \eqref{eq:NCC} is satisfied, then $\theta$ is nonincreasing, as $\hat{\sigma}_{\mu\nu}\hat{\sigma}^{\mu\nu}$ is always nonnegative. Since the apparent horizon consists of marginally outer trapped surfaces (i.e., surfaces for which the outgoing orthogonal null geodesics have $\theta = 0$), it must be either null or spacelike, so any orthogonal null congruence starting on the apparent horizon remains either on or inside the apparent horizon in the future \cite{Hawking&Ellis}. In particular, $\tilde{B}_{t^*,i}\subset W_{t^*}$. 

Now, we can also write $\theta$ as ${\rm d} \log \delta A/{\rm d} \lambda$, where $\delta A$ is an infinitesimal cross-sectional area element of $\tilde{B}_{t^*,i}(\lambda)$. That is, $\mathrm{area}[\tilde{B}_{t^*,i}(\lambda)]$ has negative second derivative in $\lambda$. Since $\theta$ vanishes on the bifurcation surface $B_i=\tilde{B}_{t^*,i}(0)$, we have that $\mathrm{area}[\tilde{B}_{t^*,i}(\lambda)]$ is monotonically nonincreasing in $\lambda$. Moreover, since for all $\lambda<0$ there exists $t<0$ such that $\tilde{B}_{t^*,i}(\lambda)\subset\Sigma_{t}$, the past-initialization condition means that $\mathrm{area}[\tilde{B}_{t^*,i}(\lambda)] = \mathrm{area}[B_i]$ for all $\lambda<0$. Hence, for all $\lambda$ we have
\be 
\mathrm{area}[\tilde{B}_{t^*,i}(\lambda)] \leq \mathrm{area}[B_i].\label{eq:decreasingarea}
\ee 

By the past-initialization condition, there are no caustics to the past of $B_i$. Further, by definition, the wormhole does not pinch off until the singularity is reached, so some subset of the generators of $\tilde{B}_i$ must extend all the way through $W_{t^*}$ without encountering caustics. Writing $\Gamma_{t^*}(\alpha)$ as a foliation of $W_{t^*}$ by achronal slices, we thus have that $\tilde{B}_{t^*,i}(\lambda) \cap \Gamma_{t^*}(\alpha)$ is never an empty set for all $\alpha$, i.e., for all $\lambda$ there exists $\alpha$ such that $\tilde{B}_{t^*,i}(\lambda)\subset\Gamma_{t^*}(\alpha)$. Moreover, we can reparameterize and identify the affine parameters for each $i$ of the $\tilde{B}_{t^*,i}$ such that for each $\lambda$ there exists $\alpha$ for which $\cup_i\tilde{B}_{t^*,i}(\lambda)\subset\Gamma_{t^*}(\alpha)$; for such $\alpha$, $\cup_i \tilde{B}_{t^*,i}$ is a complete cross-section of $\Gamma_{t^*}(\alpha)$, possibly with redundancy due to merging horizons. We choose our slicing $\Gamma_{t^*}(\alpha)$ such that there exists some $\alpha^*$ for which $\Gamma_{t^*}(\alpha^*)$ contains the maximin surface $C_{t^*}$ for $W_{t^*}$, so
\be
C_{t^*} = C_{t^*}(\alpha^*)\text{ such that } \mathrm{area}[C_{t^*}(\alpha^*)] = \max_\alpha \mathrm{area}[C_{t^*}(\alpha)],\label{eq:Ct*}
\ee
where $C_{t^*}(\alpha)$ is the codimension-two cross-section of $\Gamma_{t^*}(\alpha)$ with minimal area.

Since $\tilde{B}_{t^*,i}$ is only completely truncated at future and past boundaries of $W_{t^*}$, it follows that for every $\alpha$ there must exist $\lambda$ such that $\Gamma_{t^*}(\alpha)\supset \tilde{B}_{t^*,i}(\lambda)$. By the definition of $C_{t^*}(\alpha)$, we have (for such $\lambda$) that
\be
\mathrm{area}[C_{t^*}(\alpha)] \leq \sum_i \mathrm{area}[\tilde{B}_{t^*,i}(\lambda)].\label{eq:minsurface}
\ee
Putting together \Eqs{eq:decreasingarea}{eq:minsurface}, taking the maximum over $\lambda$ and $\alpha$ on both sides, applying \Eq{eq:Ct*}, and using the fact that $t^*$ was chosen arbitrarily, we have a $t$-independent upper bound on the area of the maximin surface $C_t$:
\be 
\mathrm{area}[C_t]\leq \sum_i \mathrm{area}[B_i].\label{eq:proofLEQ}
\ee
\end{proof}

Let us now construct a lower bound on the area of the maximin surface $C_t$. We can do this by examining an achronal codimension-one surface through $W_t$ and computing its minimal cross-sectional area; judiciously choosing the achronal surface optimizes the bound. In particular, for some arbitrary $t^*$, consider $\Gamma_{t^*}(0)$ passing through $\cup_i B_i$, where we choose the slicing such that
\be
\Gamma_{t^*}(0) = (\Sigma_0 \cap W_{t^*}) \cup \left(M_+ \cap \dot{J}^-[\Sigma_{t^*}\backslash W_{t^*}]\right),\label{eq:Gammat*0}
\ee
where $M_+$ is the restriction of $M$ to $t\geq 0$, $J^-[A]$ denotes the causal past of a set $A$, and the dot denotes its boundary. That is, $\Gamma_{t^*}(0)$ consists of the codimension-one null surfaces forming the $t\geq 0$ portion of the boundary of $W_{t^*}$ towards the past, plus a codimension-one segment of $\Sigma_0$ containing $\cup_i B_i$; see \Fig{fig:Penrose}. Let us label the left and right boundaries of $\Sigma_0 \cup W_{t^*}$ (equivalently, the left and right portions of the intersection of $\Sigma_0$ and $\dot{J}^-[\Sigma_{t^*}\backslash W_{t^*}]$) as $\tilde{L}_0$ and $\tilde{R}_0$, respectively.

We will show in two steps that the minimal cross-sectional area of $\Gamma_{t^*}(0)$ is just $\sum_i \text{area}[B_i]$. We will first consider the cross-sectional area of slices of $\Sigma_0\cap W_{t^*}$ and then examine the changes in cross-sectional area along slices of $M_+ \cap \dot{J}^-[\Sigma_{t^*}\backslash W_{t^*}]$.

\begin{lemma}
\label{Prop2}
The minimal cross-sectional area of $\Sigma_0\cap W_{t^*}$ is $\sum_i \mathrm{area}[B_i]$.
\end{lemma}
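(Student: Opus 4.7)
The plan is to prove matching upper and lower bounds on the minimal cross-sectional area of $\Sigma_0\cap W_{t^*}$.

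\textbf{Upper bound.} Each $B_i$ lies on $\Sigma_0$ by construction of the slicing and inside $W_{t^*}$ because the bifurcation surface of a horizon pair always lies in its causal diamond, so $\cup_i B_i\subset \Sigma_0\cap W_{t^*}$. The surface $\cup_i B_i$ separates $\Sigma_0\cap W_{t^*}$ into a ``left'' piece (containing $\tilde{L}_0$) and a ``right'' piece (containing $\tilde{R}_0$) wormhole-by-wormhole, so it is a valid cross-section, giving the immediate upper bound $\sum_i \mathrm{area}[B_i]$.

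\textbf{Lower bound.} Here the past-initialization hypothesis is the essential input: by assumption, for $t\leq 0$ (in particular on $\Sigma_0$) the geometry in a neighborhood of each wormhole $i$ is that of the corresponding eternal vacuum AdS--Schwarzschild solution with bifurcation surface $B_i$, with back-reaction negligible. Because the wormholes are spatially well separated on $\Sigma_0$, any codimension-two surface in $\Sigma_0\cap W_{t^*}$ separating $\tilde{L}_0$ from $\tilde{R}_0$ must decompose into a disjoint union of pieces, one per wormhole, each traversing the corresponding throat; it therefore suffices to show that any such piece through wormhole $i$ has area at least $\mathrm{area}[B_i]$.

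To establish this per-wormhole bound I would exploit the defining property of a bifurcation surface, namely that both future-directed orthogonal null expansions $\theta_k$ and $\theta_\ell$ vanish on $B_i$. Decomposing the outward spacelike normal to $B_i$ inside $\Sigma_0$ as $(k-\ell)/\sqrt{2}$, one finds that the mean-curvature vector of $B_i$ inside $\Sigma_0$ vanishes, so $B_i$ is extremal on $\Sigma_0$. The $t\mapsto -t$ reflection symmetry of the eternal AdS--Schwarzschild geometry at $t=0$ upgrades this extremum to a local minimum, and the monotone area--radius structure of the $t=0$ Einstein--Rosen bridge (the area function strictly increases away from the horizon in either direction) promotes it to a global minimum along the bridge. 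Summing over $i$ yields the matching lower bound $\sum_i \mathrm{area}[B_i]$, completing the proof.

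\textbf{Main obstacle.} The subtle step is the lower bound --- concretely, upgrading the local extremality of $B_i$ (which is immediate from $\theta_k=\theta_\ell=0$) to the global statement that $B_i$ is the smallest separating cross-section through wormhole $i$ on $\Sigma_0$. The past-initialization hypothesis is indispensable here: it reduces the question to the exact eternal AdS--Schwarzschild geometry around each wormhole, where the time-reflection symmetry and the explicit area--radius relation make the global minimality of the bifurcation surface transparent; without it, one would have to grapple with back-reacted geometries in which ``throat'' no longer has a clean definition.
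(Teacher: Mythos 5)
Your proposal is correct and follows essentially the same route as the paper: past-initialization reduces the geometry of $\Sigma_0$ near each wormhole to the $t=0$ slice of the eternal AdS--Schwarzschild solution, where the areal radius satisfies $r\geq r_{\rm H}$ with equality precisely on $B_i$, so any separating cross-section through wormhole $i$ has area at least $\Omega_{D-2}r_{\rm H}^{D-2}=\mathrm{area}[B_i]$. The paper implements this by writing the induced Kruskal metric explicitly and noting the monotonicity of $X$ (equivalently $r^*$) in $r$; your additional remarks on the vanishing mean curvature of $B_i$ and the time-reflection symmetry are harmless scaffolding around the same monotone area--radius argument that does the real work in both versions.
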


\begin{proof}
By the requirement that the wormholes be past-initialized, the metric on $\Sigma_0$ is, up to negligible back-reaction, just a number of copies of the metric on the $t=0$ slice of the single maximally-extended AdS-Schwarzschild black hole; for this metric the $t_{\rm KS} = 0$ and $t_{\rm S} = 0$ slices are the same, where $t_{\rm KS}$ is the Kruskal-Szekeres time coordinate and $t_{\rm S}$ is the Schwarzschild time coordinate \cite{Bao:2015nca}. Taking the $t$-slicing to correspond to the Kruskal-Szekeres coordinates in the vicinity of each wormhole, therefore, the metric on $\Sigma_0 \cap W_{t^*}$ is 
\be
{\rm d}s^2_{\Sigma_0 \cap W_t} = \frac{4|f(r)|e^{-f'(r_{\rm H})r^*}}{[f'(r_{\rm H})]^2}{\rm d}X^2 + r^2 {\rm d}\Omega_{D-2}^2 =  \frac{{\rm d}r^2}{f(r)} + r^2 {\rm d}\Omega_{D-2}^2,
\ee
where on $\Sigma_0$, the Kruskal $X$ coordinate describing distance away from the wormhole mouth at $B_i$ is $X = \pm e^{f'(r_{\rm H}) r^*/2}$, with the sign demarcating the left and right side of $B_i$ and the tortoise coordinate being $r^* = \int {\rm d} r/f(r)$. The function $f(r)$ is
\be
f(r) = 1 - \frac{16\pi G_D M}{(D-2)\Omega_{D-2}r^{D-3}} + \frac{r^2}{\ell^2}, 
\ee
where $\Omega_{D-2}$ is the area of the unit $(D-2)$-sphere, $G_D$ is Newton's constant in $D$ dimensions, $M$ is the initial mass of each wormhole mouth, $\ell$ is the AdS length, and $r_{\rm H}$ is the initial horizon radius, defined such that $f(r_{\rm H}) = 0$.  For $r>r_{\rm H}$, $f(r)$ is strictly positive, so $r^*$ and $X$ are monotonic in $r$. As we move from $B_i$ at $X = 0$ towards $\tilde{L}_0$ or $\tilde{R}_0$ at $X_{\rm L}$ and $X_{\rm R}$, the area of the cross-section of $\Sigma_0\cap W_{t^*}$ for the surface parameterized by $X(\phi)$ [or equivalently $r(\phi)$], for $(D-2)$ angular variables $\phi$, attains its minimum at $B_i$, where $r(\phi)$ is identically $r_{\rm H}$, its minimum on $\Sigma_0 \cap W_{t^*}$.
\end{proof}

We now turn to the behavior of the cross-sectional area of $M_+ \cap \dot{J}^-[\Sigma_{t^*}\backslash W_{t^*}]$.

\begin{lemma}
\label{Prop3}
The cross-sectional area of $M_+\cap\dot{J}^-[\Sigma_{t^*}\backslash W_{t^*}]$ is nondecreasing towards the future.
\end{lemma}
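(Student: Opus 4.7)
The plan is to apply the Raychaudhuri equation to the null congruence generating $M_+\cap\dot{J}^-[\Sigma_{t^*}\backslash W_{t^*}]$, in close analogy with the proof of Proposition~\ref{Prop1} but with the roles of past and future endpoints of the congruence interchanged. This codimension-one null surface is the $t\geq 0$ portion of the past boundary of $W_{t^*}$; it is generated by future-directed null geodesics that start on $\tilde{L}_0\cup\tilde{R}_0\subset\Sigma_0$ and terminate on the outermost apparent horizon cross-sections $L_{t^*,i}\cup R_{t^*,i}\subset\Sigma_{t^*}$. Because the generators emanate orthogonally from each spacelike codimension-two anchor surface, the same argument used in Proposition~\ref{Prop1} forces the rotation $\hat{\omega}_{\mu\nu}$ to vanish identically along the congruence.

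Parameterize these generators by an affine parameter $\lambda$ that increases toward the future, with $\lambda=\lambda_{\max}$ at the apparent horizon. The essential input is a boundary condition on $\theta$ at $\lambda_{\max}$, which is supplied directly by the marginally outer trapped nature of $L_{t^*,i}$ and $R_{t^*,i}$: by the very definition of the outermost apparent horizon, the outgoing orthogonal null expansion vanishes on these surfaces, so $\theta(\lambda_{\max})=0$. Feeding this into the Raychaudhuri equation~\eqref{eq:Raychaudhuri} along with the null curvature condition~\eqref{eq:NCC} and the nonnegativity of $\hat{\sigma}_{\mu\nu}\hat{\sigma}^{\mu\nu}$ gives $\mathrm{d}\theta/\mathrm{d}\lambda\leq 0$, so $\theta$ is nonincreasing toward the future. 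Integrating backward from $\theta(\lambda_{\max})=0$ then yields $\theta(\lambda)\geq 0$ throughout the congruence, and using $\theta=\mathrm{d}\log\delta A/\mathrm{d}\lambda$ for an infinitesimal cross-sectional area element immediately shows that the cross-sectional area of $M_+\cap\dot{J}^-[\Sigma_{t^*}\backslash W_{t^*}]$ is nondecreasing toward the future.

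The main subtlety I anticipate is confirming that the generators can be followed continuously from $\Sigma_{t^*}$ all the way back to $\Sigma_0$ without leaving $\dot{J}^-[\Sigma_{t^*}\backslash W_{t^*}]$ at a caustic. By the standard characterization of causal boundaries, a generator of $\dot{J}^-$ exits the boundary precisely when it encounters a past-directed caustic, so the Raychaudhuri argument is automatically confined to the smooth portion of the congruence; to guarantee that $\tilde{L}_0\cup\tilde{R}_0$ is actually nonempty and that no caustic intervenes in $0\leq t\leq t^*$, one appeals to past-initialization exactly as in Proposition~\ref{Prop1}, since for $t\leq 0$ the metric near each wormhole is approximately eternal AdS-Schwarzschild, whose outgoing horizon generators have $\theta=0$ and do not focus. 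Horizon mergers among the $L_{t^*,i}$ or $R_{t^*,i}$ only add connected components to the cross-section as $\lambda$ increases and therefore do not threaten monotonicity of the total area.
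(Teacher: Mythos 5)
Your proof is correct, but it takes a genuinely different route from the paper's. You run the focusing argument backward from the future endpoints of the generators: since $\tilde{L}$ and $\tilde{R}$ are generated by the outgoing orthogonal null congruences anchored on $L_{t^*}$ and $R_{t^*}$, which are marginally outer trapped, you impose $\theta=0$ there and integrate the Raychaudhuri equation (with vanishing rotation and the null curvature condition) into the past to get $\theta\geq 0$ along the whole congruence. The paper instead argues by contradiction using the \emph{outermost} property of the apparent horizon rather than its marginal trappedness: it first observes that every outer trapped surface must lie strictly inside $\tilde{L}\cap\Sigma_t$ for $t\in[0,t^*]$ (because apparent horizons are null or spacelike and $\tilde{L}$ is the null boundary of the past of a slice through the outermost one), and then shows that a segment of $\tilde{L}$ with $\tilde{\theta}<0$ would, by continuity, yield an outer trapped surface not strictly inside $\tilde{L}\cap\Sigma_{\tilde{t}}$ --- a contradiction. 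The two proofs lean on slightly different inputs: yours on the exact vanishing of the outgoing expansion on the apparent horizon plus Raychaudhuri and the NCC, the paper's on the nonexistence of outer trapped surfaces outside the outermost apparent horizon; yours is more direct and gives the stronger pointwise statement $\theta\geq 0$, while the paper's sidesteps any discussion of the congruence's regularity. One small caution: your appeal to past-initialization to exclude caustics in $0\leq t\leq t^*$ is misplaced, since that condition only controls $t\leq 0$; but this does not damage the result, because generators that leave $\dot{J}^-[\Sigma_{t^*}\backslash W_{t^*}]$ at past caustics or crossing points only add cross-sectional area as one moves toward the future, so monotonicity survives regardless.
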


\begin{proof}
Let us label the left and right halves of $M_+\cap\dot{J}^-[\Sigma_{t^*}\backslash W_{t^*}]$ as $\tilde{L}$ and $\tilde{R}$, so the boundary of $\tilde{L}$ is just $\tilde{L}_0 \cup L_{t^*}$ and similarly for $\tilde{R}$. We note that both $\tilde{L}$ and $\tilde{R}$ are generated by outgoing null geodesics. Suppose that some segment of $M_+\cap\dot{J}^-[\Sigma_{t^*}\backslash W_{t^*}]$ has area decreasing towards the future. We can without loss of generality restrict to the left null surface, which we then assume has decreasing area along some segment.

We first observe that since the apparent horizons are null or spacelike and since $\tilde{L}$ is part of the null boundary of the past of a slice through the outermost apparent horizon, all outer trapped surfaces must lie strictly inside $\tilde{L}\cap \Sigma_t$ for all spacelike slices $\Sigma_t$ for $t\in[0,t^*]$.

Let us define an affine parameter $\tilde{\lambda}$ for $\tilde{L}$, for which $\tilde{\lambda} = 0$ on $\tilde{L}_0$ and $\tilde{\lambda} = 1$ on $L_{t^*}$, and consider the expansion $\tilde{\theta} = \nabla_\mu \tilde{k}^\mu$, where $\tilde{k}^\mu = (\mathrm{d}/\mathrm{d}\tilde{\lambda})^\mu$. In order for the area to be strictly decreasing, there must be some open set $U$ for which $\tilde{\theta}(\tilde{\lambda})<0$ for $\tilde{\lambda} \in U$. By continuity of the spacetime, there must exist $\tilde{t}$, where we can choose the affine parameterization such that $\Sigma_{\tilde{t}}\supset \tilde{L}(\tilde{\lambda})$ for some $\tilde{\lambda}\in U$, such that $\Sigma_{\tilde{t}}$ contains a region $V\supset \tilde{L}(\tilde{\lambda})$ for which $\tilde{\theta}\leq 0$ for all outgoing orthogonal null congruences originating from $V$. Then $V$ is an outer trapped surface not strictly inside $\tilde{L}\cap\Sigma_{\tilde{t}}$. This contradiction completes the proof.
\end{proof}

Thus, we have constructed a lower bound for the area of $C_t$.

\begin{lemma}
\label{Prop4}
The area of $C_t$ is lower bounded by the sum of the areas of the initial bifurcation surfaces $B_i$.
\end{lemma}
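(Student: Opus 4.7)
To lower-bound $\mathrm{area}[C_{t^*}]$, I will exhibit a single explicit achronal slice of $W_{t^*}$ whose minimum cross-sectional area is already $\sum_i \mathrm{area}[B_i]$. Since the maximin area is, by construction, the maximum over all slicings of the minimum cross-section on each slice, producing such a candidate slice delivers the desired inequality at once. The natural choice, already introduced in Eq.~(\ref{eq:Gammat*0}), is $\Gamma_{t^*}(0)=(\Sigma_0\cap W_{t^*})\cup (M_+\cap\dot{J}^-[\Sigma_{t^*}\backslash W_{t^*}])$, which glues the $t=0$ spacelike piece onto the past null boundary $\tilde{L}\cup\tilde{R}$. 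Restricted to wormhole $i$, this slice traverses three segments connecting $L_{t^*,i}$ to $R_{t^*,i}$: down $\tilde{L}_i$ to $\tilde{L}_{0,i}$ on $\Sigma_0$, across $\Sigma_0\cap W_{t^*,i}$ through $B_i$ to $\tilde{R}_{0,i}$, and then up $\tilde{R}_i$.

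Next, I would compute the minimum cross-section of $\Gamma_{t^*}(0)$ segment by segment, per wormhole. Proposition~\ref{Prop3}, applied in the past-directed sense along $\tilde{L}_i$ and $\tilde{R}_i$, bounds the area of any cut on those null pieces below by $\mathrm{area}[\tilde{L}_{0,i}]$ and $\mathrm{area}[\tilde{R}_{0,i}]$, respectively. Proposition~\ref{Prop2} gives $\mathrm{area}[B_i]$ as the minimum along the spacelike piece, attained at $B_i$. And because $\tilde{L}_{0,i}$ and $\tilde{R}_{0,i}$ sit strictly outside the horizon on $\Sigma_0$ (in the region $r>r_{\rm H}$ of the Schwarzschild-AdS description used in the proof of Proposition~\ref{Prop2}), the same $r$-monotonicity yields $\mathrm{area}[\tilde{L}_{0,i}],\,\mathrm{area}[\tilde{R}_{0,i}]\geq\mathrm{area}[B_i]$. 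Thus for each $i$ the per-wormhole minimum along the three-segment path is $\mathrm{area}[B_i]$, attained at the bifurcation surface; summing, the minimum cross-section of $\Gamma_{t^*}(0)$ is $\sum_i\mathrm{area}[B_i]$, attained at $\cup_i B_i$.

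Embedding $\Gamma_{t^*}(0)$ as one slice in any admissible slicing of $W_{t^*}$ and invoking the maximin definition then gives $\mathrm{area}[C_{t^*}]\geq\sum_i\mathrm{area}[B_i]$; since $t^*>0$ was arbitrary, the proposition follows. The step I expect to demand the most care is controlling ``mixed'' cross-sections, i.e.\ ones cutting different wormholes along different segments of the three-piece path, or transitioning between the null and spacelike pieces within a single wormhole. The per-wormhole argument above already covers this: the infimum of the cross-sectional area along each of the three segments is at least $\mathrm{area}[B_i]$, and the bounds match continuously at the interfaces $\tilde{L}_{0,i},\,\tilde{R}_{0,i}$, so whichever decomposition one takes, summing over wormholes still returns $\sum_i\mathrm{area}[B_i]$. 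A minor separate subtlety is orienting Proposition~\ref{Prop3}, which was established for future-directed evolution along $\tilde{L},\tilde{R}$, correctly for the past-directed application needed here.
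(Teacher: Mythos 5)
Your proposal is correct and follows essentially the same route as the paper: exhibit the achronal slice $\Gamma_{t^*}(0)$ of \Eq{eq:Gammat*0}, use Proposition~\ref{Prop2} to get the minimum $\sum_i\mathrm{area}[B_i]$ on the spacelike piece together with $\sum_i\mathrm{area}[B_i]\leq\mathrm{area}[\tilde{L}_0]+\mathrm{area}[\tilde{R}_0]$, use Proposition~\ref{Prop3} to place the minimum of the null piece at $\tilde{L}_0\cup\tilde{R}_0$, and invoke the maximin definition. Your worry about ``orienting'' Proposition~\ref{Prop3} is moot, since as stated it already says the area is nondecreasing toward the future, which directly locates the minimal cut at the past end $\tilde{L}_0\cup\tilde{R}_0$; your extra care with mixed cross-sections is a harmless elaboration of the same argument.
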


\begin{proof}
To prove a lower bound on the maximin area, $\text{area}[C_{t^*}]$, it suffices to exhibit an achronal surface through $W_{t^*}$ for which the minimal cross-sectional area is equal to the desired lower bound. Such a surface is given by $\Gamma_{t^*}(0)$ in \Eq{eq:Gammat*0}: by Proposition~\ref{Prop2}, $\sum_i \text{area}[B_i]$ is the minimal cross-sectional area of $\Sigma_0 \cap W_{t^*}$ and, in particular, $\sum_i \text{area}[B_i] \leq \text{area}[\tilde{L}_0] + \text{area}[\tilde{R}_0]$. By Proposition~\ref{Prop3}, the minimal cross-sectional area of $M_+\cap\dot{J}^-[\Sigma_{t^*}\backslash W_{t^*}]$ is $\text{area}[\tilde{L}_0] + \text{area}[\tilde{R}_0]$. Thus, $\Gamma_{t^*}(0)$ is an achronal slice through $W_{t^*}$ with minimal cross-sectional area equal to $\sum_i \text{area}[B_i]$.
\end{proof}

Finally, as an immediate corollary, we have the gravity dual of entanglement conservation.

\begin{theorem} 
For the family of spacetime regions $W_t$ defined as the causal diamonds anchored on the piecewise-connected outermost apparent horizons $L_t$ and $R_t$ for an arbitrary set of dynamical, past-initialized wormholes and black holes satisfying the null curvature condition, the corresponding maximin surface $C_t$ dividing the left and right collections of wormholes has an area independent of $t$, equaling the sum of the areas of the initial bifurcation surfaces for the wormholes linking the left and right sets of horizons.
\end{theorem}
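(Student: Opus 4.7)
The theorem follows as an immediate corollary of Propositions~\ref{Prop1} and~\ref{Prop4}, so my plan is simply to combine the two bounds. First I would apply Proposition~\ref{Prop1} to obtain $\mathrm{area}[C_t]\leq\sum_i\mathrm{area}[B_i]$; this inequality holds for every $t>0$ because the choice of $t^*$ in Proposition~\ref{Prop1} was arbitrary. Next I would apply Proposition~\ref{Prop4} to obtain the matching lower bound $\mathrm{area}[C_t]\geq\sum_i\mathrm{area}[B_i]$, which likewise holds for all $t>0$.

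Sandwiching the two inequalities yields $\mathrm{area}[C_t]=\sum_i\mathrm{area}[B_i]$, and the right-hand side manifestly depends only on the initial bifurcation surfaces $B_i\subset\Sigma_0$ and makes no reference to $t$. This directly establishes the claimed $t$-independence of the maximin area. The degenerate case $t=0$ is consistent with the statement: $W_0$ reduces to the codimension-two surface $\cup_i B_i$, so the maximin surface is forced to coincide with $\cup_i B_i$ itself, whose area is trivially $\sum_i\mathrm{area}[B_i]$.

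There is no real obstacle remaining at the level of the theorem itself; the substantive geometric content has already been absorbed into the propositions, in particular the Raychaudhuri-based monotonicity used in Proposition~\ref{Prop1} and the trapped-surface contradiction of Proposition~\ref{Prop3} that controls the past null boundary of $W_{t^*}$. Compatibility of the $t$-slicing $\Sigma_t$ and the achronal slicings $\Gamma_t(\alpha)$ used across the propositions with the theorem's hypotheses (global hyperbolicity of $M\cup\partial M$, past-initialization, and the null curvature condition) was already secured in the setup preceding Proposition~\ref{Prop1}, so the combination of the upper and lower bounds is unobstructed, and the area equality holds identically for all $t$.
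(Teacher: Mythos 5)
Your proposal is correct and matches the paper's own proof exactly: the theorem is obtained by sandwiching the upper bound of Proposition~\ref{Prop1} against the lower bound of Proposition~\ref{Prop4}, with all the geometric work already done in the propositions. The additional remarks on the $t=0$ case and slicing compatibility are consistent with the paper's setup but not needed beyond what the propositions already supply.
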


\begin{proof}
By Proposition~\ref{Prop1}, $\text{area}[C_t]\leq\sum_i\text{area}[B_i]$, while by Proposition~\ref{Prop4}, $\text{area}[C_t]\geq\sum_i\text{area}[B_i]$. Hence,
\be
\text{area}[C_t] = \sum_i \text{area}[B_i]. 
\ee
\end{proof}
Thus, the maximin surface dividing one collection of wormhole mouths from another has an area that is conserved under arbitrary spacetime evolution and horizon mergers as well as arbitrary addition of matter satisfying the null energy condition. Viewing the maximin surface area as the entanglement entropy associated with the left and right sets of horizons in accordance with the HRT prescription, we have proven a statement in general relativity that is a precise analogue of the statement in \Sec{sec:entanglement} of conservation of entanglement under evolution of a state with a tensor product unitary operator.

\section{Conclusions}\label{sec:Conclusions}

The proposed ER=EPR correspondence is surprising insofar as it identifies a generic feature (entanglement) of any quantum mechanical theory with a specific geometric and topological structure (wormholes) in a specific theory with both gravity and spacetime (quantum gravity). Until an understanding is reached of the geometrical nature of the ``quantum wormholes'' that should be dual to, e.g., individual entangled qubits, it will be difficult to directly establish the validity of the ER=EPR correspondence as a general statement about quantum gravity. In a special limiting case of quantum gravity---namely, the classical limit, which gives general relativity---this task is more tractable. In this paper, we have provided a general and explicit elucidation of the ER=EPR correspondence in this limit. For a spacetime geometry with an arbitrary set of wormholes and black holes, we have constructed the maximin area of the multi-wormhole throat separating a subset of the wormholes from the rest of the geometry, the analogue of the entanglement entropy of a reduced density matrix constructed from a subset of the degrees of freedom of a quantum mechanical state. We then proved that the maximin area is unchanged under all operations that preserve the relation between the subset and the rest of the geometry, the equivalent of quantum mechanical operations that leave the entanglement entropy invariant. We have therefore completely characterized the ER=EPR relation in the general relativistic limit: the entanglement entropy and area (in the sense defined above) of wormholes obey precisely the same rules. 

In addition to providing an examination of the ER=EPR duality, our result constitutes a new area theorem within general relativity. The maximin area of the wormhole throat is invariant under dynamical spacetime evolution and the addition of classical matter satisfying the null energy condition. The dynamics of wormhole evolution were already constrained topologically (see \Ref{Bao:2015nqa} and references therein), but this result goes further by constraining them geometrically. Note that throughout this paper we have worked in asymptotically AdS spacetimes in order to relate our results to a boundary theory using the language of the AdS/CFT correspondence, but our area theorem is independent of this asymptotic choice provided that all of the black holes are smaller than the asymptotic curvature scale.

In the classical limit, we have characterized and checked the consistency of the ER=EPR correspondence in generality. However, extending these insights to a well-defined notion of quantum spacetime geometry and topology remains a formidable task. Understanding the nature of the ER=EPR duality for fully quantum mechanical systems suggests a route toward addressing the broader question of the relationship between entanglement and geometry.

\begin{center} 
 {\bf Acknowledgments}
 \end{center}
 \noindent 

We thank Sean Carroll and Mukund Rangamani for useful discussions and comments. This research was supported in part by DOE grant DE-SC0011632 and by the Gordon and Betty Moore Foundation through Grant 776 to the Caltech Moore Center for Theoretical Cosmology and Physics. N.B. is supported by the DuBridge postdoctoral fellowship at the Walter Burke Institute for Theoretical Physics. G.N.R. is supported by a Hertz Graduate Fellowship and a NSF Graduate Research Fellowship under Grant No.~DGE-1144469.

\bibliographystyle{utphys}

\bibliography{EPRConsBib}

\end{document}